\newtheorem{theorem}{Theorem}
\newtheorem{corollary}{Corollary}
\newenvironment{proof}{{\bf Proof.}}{}
\def\qed{\rule{0.4em}{1.4ex}} 
 \newcommand{\set}[1]{\{#1\}}
\newcommand{\PA}{1}
\newcommand{\PB}{2}
\newcommand{\SA}{{S_1}}
\newcommand{\SB}{{S_2}}
\newcommand{\SR}{S_{P}}
\newcommand{\gamegraph}{G}
\newcommand{\winval}[1]{\langle \! \langle #1 \rangle\! \rangle_{\mathit{val}} }
\newcommand{\va}{\winval{1}}
\newcommand{\vb}{\winval{2}}
\newcommand{\Prb}{\mathrm{Pr}}
\newcommand{\Exp}{\mathbb{E}}
\newcommand{\seq}[1]{\langle #1 \rangle}
\newcommand{\trans}{\delta}
\newcommand{\distr}{{\cal D}}
\newcommand{\supp}{\mathrm{Supp}}
\newcommand{\FM}{\mathit{F}}
\newcommand{\PF}{\mathit{PF}}
\newcommand{\PM}{\mathit{PM}}
\newcommand{\mem}{{\tt M}}
\newcommand{\Aa}{{\cal A}}
\newcommand{\slopefrac}[2]{\leavevmode\kern.1em
  \raise .5ex\hbox{\the\scriptfont0 #1}\kern-.1em
  /\kern-.15em\lower .25ex\hbox{\the\scriptfont0 #2}}
\newcommand{\half}{\slopefrac{1}{2}}
\newcommand{\pat}{\omega}  \newcommand{\Paths}{\Omega}
\newcommand{\straa}{\sigma} \newcommand{\Straa}{\Sigma}
\newcommand{\strab}{\pi} \newcommand{\Strab}{\Pi}
 \newcommand{\Nats}{\mathbb{N}}
\newcommand{\reals}{\mathbb{R}} 
 \newcommand{\Reach}[1]{\mathrm{Reach}(#1)}
\newcommand{\TwoDisc}{\mathrm{TwoDisc}}
\def\@comment{\let\do\@makeother \dospecials\catcode`\^^M=10\def\par{}}
\def\begincomment{\@comment\@xcomment}
\title{Discounting in Games across Time Scales\thanks{
  This research was funded in part by the US National Science Foundation grants
  CCF-0546170 and CNS-0702881, and DARPA grant HR0011-09-1-0037.
}}
\author{Krishnendu Chatterjee
\institute{IST Austria (Institute of Science and Technology Austria)}
\email{krishnendu.chatterjee@ist.ac.at}
\and
\qquad\qquad Rupak Majumdar
\institute{University of California, Los Angeles, USA and}
\institute{Max Planck Institute for Software Systems, Germany}
\email{rupak@cs.ucla.edu}
}
\begin{document}
\maketitle
\begin{abstract}
We introduce two-level discounted games played by two players on a
perfect-information stochastic game graph.
The upper level game is a discounted game and the lower level game is an
undiscounted reachability game.
Two-level games model hierarchical and sequential decision making under uncertainty
across different time scales.
We show the existence of pure memoryless optimal strategies for both players and
an ordered field property for such games.
We show that if there is only one player (Markov decision processes), then
the values can be computed in polynomial time.
It follows that whether the value of a player is equal to a given rational
constant in two-level discounted games can be decided in NP $\cap$ coNP.
We also give an alternate strategy improvement algorithm to compute the value.
\end{abstract}

\section{Introduction}

Discrete stochastic games have been extensively studied
as models for decision making under adversarial interactions in
an uncertain environment, and have found many applications, such as in
manufacturing systems control and inventory management \cite{FV97}.

In many such applications, the interaction with the environment occurs
in a hierarchical manner, intercalated across different time scales.
In the short-term, a decision has to be made about choosing one
of several possible actions.
For example, short term decisions can determine whether to buy a certain
product or another, or whether to increase or decrease production capacity.
In the long-term, the system gets a profit or a loss at each step
based on its existing inventory.
Both short-term and long-term decisions can potentially involve
uncertainty and adversarial interactions.
Moreover, long term decisions are influenced by the short term actions chosen,
and model the effect of the local decisions on the overall
profits or losses of the system.

Technically,
the two types of interaction are modeled using two distinct
classes of games.
Undiscounted {\em reachability games} are used to model short-term decision
making (e.g., what action to take next).
In a reachability game on a state space, one fixes a set of goal states, and
the objective of player~1 is to maximize the probability of reaching
the goal states.

On the other hand, {\em discounted reward games}
model long-term rewards for the system
(e.g., how the actions chosen locally relate to long-term profits).
In a discounted game,
player~1 gets a reward in each step, and a time discount
parameter $\lambda \in (0,1)$
is used to ``discount'' the reward at future time points (i.e.,
the reward $r$ obtained $t$ time units in the future is given a value
$\lambda^tr$).
The short-term interactions are abstracted away into an atomic step
that uniformly sets the time granularity.
The objective of player~1 is to maximize the expected normalized sum of
discounted rewards.

To make the games concrete, consider economic policymakers setting
financial policy.
The specific policy implemented (e.g., the interest
rate or the amount of regulation) affects the long-term health of the economy,
and the interplay between financial policy and the market can be modeled using discounted rewards.
However, in each step, the specific policy chosen depends on ``short-term'' games
between various stakeholders, such as politicians, the treasury, companies, and various interest groups.
In this setting, the time granularity of long-term steps (policy implementation) is variable,
and depends on the length and outcome of the short-term steps (deciding which policy to implement).

While each game model in itself provides a sound theoretical basis
for reasoning about system behavior,
the hierarchical interaction and varying time granularities (short-term
vs.\ long-term)
present in many applications is not adequately captured by either model.
In this paper, we introduce models for such multi-level interactions and
algorithms for sequential decision making in a setting where the time granularity
can be variable.
We define a {\em two-level} discounted game, in which
a ``lower level'' reachability game is used to decide actions for
a ``higher level'' discounted game.
The discount factor is applied to the time scale of the higher-level game,
not for every step that elapses in the lower-level game.
Since every lower level game is different we obtain complete independence
in granularity of transitions.

Our main result is the existence of value and pure memoryless strategies
in two-level discounted games.
Moreover, we show that the value and optimal strategies can be computed
in polynomial time for Markov decision processes, and the complexity of
checking if the value is equal to a rational is in NP $\cap$ co-NP for
2$\frac{1}{2}$-player two-level discounted games.
Two-level discounted games subsume classical discounted games, and
our complexity bounds match the best known results for classical
discounted games.

Technically, we combine the existence of pure memoryless strategies in
discounted games \cite{Sha53} with the existence of pure memoryless
strategies in (undiscounted) 2$\frac{1}{2}$-player reachability games \cite{KS81,FV97,Mar98} together
with a reduction from two-level games to a one-level discounted game.
In particular, we show that for Markov decision processes, we can formulate
the value at a state as a linear programming problem over the states of
the two-level game.
Thus, the games have an ordered field property: if all constants in the definition of the game
come from a field $F$, then the value is also in $F$; in particular, games with rational probabilities
and rational discount factors have a rational value.
Together with the existence of pure memoryless strategies, this implies that the decision
problem to check if the value is equal to a given rational is in NP $\cap$ co-NP.
We also give a strategy improvement algorithm to compute the value, by combining strategy
improvement algorithms for stochastic reachability \cite{Con93} and discounted games \cite{FV97}.

Thus our new model of stochastic games provides a uniform framework for decision making across different
time scales, and our algorithms show how to decide optimally in such a framework.

\section{Definitions}
\label{section:definition}

We consider several classes of turn-based games: two-player
turn-based probabilistic games ($2\half$-player games), two-player
turn-based deterministic games ($2$-player games), and Markov
decision processes ($1\half$-player games).

\smallskip\noindent{\bf Notation.}
For a finite set~$A$, a {\em probability distribution\/} on $A$ is a
function $\trans\!:A\to[0,1]$ such that $\sum_{a \in A} \trans(a) = 1$.
We denote the set of probability distributions on $A$ by $\distr(A)$.
Given a distribution $\trans \in \distr(A)$, we denote by
$\supp(\trans) = \{x \in A \mid \trans(x) > 0\}$ the {\em support\/}
of $\trans$.

\smallskip\noindent{\bf Game graphs.}
A \emph{turn-based probabilistic game graph}
(\emph{$2\half$-player game graph})
$\gamegraph =((S, E), (\SA,\SB,\SR),\trans)$
consists of a directed graph $(S,E)$, a partition $(\SA$, $\SB$,
$\SR)$ of the finite set $S$ of states, and a probabilistic transition
function $\trans$: $\SR \rightarrow \distr(S)$, where $\distr(S)$ denotes the
set of probability distributions over the state space~$S$.
The states in $\SA$ are the {\em player-$\PA$\/} states, where player~$\PA$
decides the successor state; the states in $\SB$ are the {\em
player-$\PB$\/} states, where player~$\PB$ decides the successor state;
and the states in $\SR$ are the {\em probabilistic\/} states, where
the successor state is chosen according to the probabilistic transition
function~$\trans$.
We assume that for $s \in \SR$ and $t \in S$, we have $(s,t) \in E$
iff $\trans(s)(t) > 0$, and we often write $\trans(s,t)$ for $\trans(s)(t)$.
For technical convenience we assume that every state in the graph
$(S,E)$ has at least one outgoing edge.
For a state $s\in S$, we write $E(s)$ to denote the set
$\set{t \in S \mid (s,t) \in E}$ of possible successors.
The size of a game graph $G=((S,E),(S_1,S_2,\SR),\trans)$
is
\[
|G|= |S| + |E| + \sum_{t \in S} \sum_{s \in \SR} |\trans(s)(t)|;
\]
where $|\trans(s)(t)|$ denotes the space to represent the transition
probability $\trans(s)(t)$ in binary.

The {\em turn-based deterministic game graphs} (\emph{2-player game graphs})
are the special case of the $2\half$-player game graphs with $\SR = \emptyset$.
The \emph{Markov decision processes} (\emph{$1\half$-player game graphs})
are the special case of the $2\half$-player game graphs with
$\SA = \emptyset$ or $\SB = \emptyset$.
We refer to the MDPs with $\SB=\emptyset$ as \emph{player-$\PA$ MDPs},
and to the MDPs with $\SA=\emptyset$ as \emph{player-$\PB$ MDPs}.

\smallskip\noindent{\bf Plays and strategies.}
An infinite path, or \emph{play}, of the game graph $\gamegraph$ is an
infinite
sequence $\pat=\seq{s_0, s_1, s_2, \ldots}$ of states such that
$(s_k,s_{k+1}) \in E$ for all $k \in \Nats$.
We write $\Paths$ for the set of all plays, and for a state $s \in S$,
we write $\Paths_s\subseteq\Paths$
for the set of plays that start from the state~$s$.

A \emph{strategy} for  player~$\PA$ is a function
$\straa$: $S^*\cdot \SA \to \distr(S)$ that assigns a probability
distribution to all finite sequences $\vec{w} \in S^*\cdot \SA$ of states
ending in a player-1 state
(the sequence represents a prefix of a play).
Player~$\PA$ follows the strategy~$\straa$ if in each player-1
move, given that the current history of the game is
$\vec{w} \in S^* \cdot \SA$, she chooses the
next state according to the probability distribution $\straa(\vec{w})$.
A strategy must prescribe only available moves, i.e.,
for all $\vec{w} \in S^*$, and
$s \in \SA$ we have $\supp(\straa(\vec{w} \cdot s)) \subseteq E(s)$.
The strategies for player~2 are defined analogously.
We denote by $\Straa$ and $\Strab$ the set of all strategies for player~$\PA$
and player~$\PB$, respectively.

Once a starting state  $s \in S$ and strategies $\straa \in \Straa$
and $\strab \in \Strab$ for the two players are fixed, the outcome
of the game is a random walk $\pat_s^{\straa, \strab}$ for which the
probabilities of events are uniquely defined, where an \emph{event}
$\Aa \subseteq \Paths$ is a measurable set of paths.
For a state $s \in S$ and an event $\Aa\subseteq\Paths$, we write
$\Prb_s^{\straa, \strab}(\Aa)$ for the probability that a path belongs
to $\Aa$ if the game starts from the state $s$ and the players follow
the strategies $\straa$ and~$\strab$, respectively.
Similarly we denote by $\Exp_{s}^{\straa,\strab}(\cdot)$ the
expectation under the probability measure $\Prb_s^{\straa,\strab}(\cdot)$.
In the context of player-1 MDPs we often omit the argument~$\strab$, because
$\Strab$ is a singleton set.

We classify strategies according to their use of randomization and memory.
Strategies that do not use randomization are called pure; formally,
a player-1 strategy~$\straa$ is \emph{pure} if for all $\vec{w} \in S^*$
and $s \in \SA$, there is a state~$t \in S$ such that
$\straa(\vec{w}\cdot s)(t) = 1$.
We denote by $\Straa^{P}\subseteq\Sigma$ the set of pure strategies for
player~1.
In order to emphasize the potential use of randomization,
we call a (general) strategy \emph{randomized}.
Let $\mem$ be a set called \emph{memory}, that is, $\mem$ is a
set of memory elements.
A player-1 strategy $\straa$ can be described as a pair of functions
$\straa=(\straa_u,\straa_m)$:
a \emph{memory-update} function $\straa_{u}$: $S \times \mem \to \mem$
and a \emph{next-move} function $\straa_{m}$:
$\SA \times \mem \to \distr(S)$.
We can think of strategies with memory as input/output automata
computing the strategies (see~\cite{DJW97} for details).
A strategy $\straa=(\sigma_u,\sigma_m)$ is
\emph{finite-memory} if the memory $\mem$ is finite,
and then the size of the strategy
$\straa$, denoted as $|\straa|$, is the size of its memory $\mem$, i.e., $|\straa|=|\mem|$.
We denote by $\Straa^{\FM}$ the set of finite-memory strategies for player~1,
and by $\Straa^{\PF}$ the set of {\em pure finite-memory} strategies;
that is, $\Straa^{\PF}= \Straa^P \cap \Straa^F$.
The strategy $(\sigma_u,\sigma_m)$ is {\em memoryless\/} if $|\mem|=1$;
that is, the next move
does not depend on the history of the play but only on the current state.
A memoryless player-1 strategy can be represented as a function
$\straa$: $\SA \to \distr(S)$.
A \emph{pure memoryless strategy} is a pure strategy that is memoryless.
A pure memoryless strategy for player~1 can be represented as
a function $\straa$: $\SA \to S$.
We denote by $\Straa^M$ the set of memoryless strategies for player~1,
and by $\Straa^{\PM}$ the set of pure memoryless strategies;
that is, $\Straa^{\PM}= \Straa^P \cap \Straa^M$.
Analogously we define the corresponding strategy families
$\Pi^P$, $\Pi^{\FM}$, $\Pi^{\PF}$, $\Pi^M$, and $\Pi^{\PM}$ for
player~2.

\medskip\noindent{\bf Two-level discounted games.}
A \emph{two-level discounted game} consists of a turn-based probabilistic
game graph $G$; a partion of the state space $S$ into $(S_u,S_l)$
the set $S_u$ of upper level states and the set $S_l$
of lower level states; and a reward function $r: S_u \to \reals_{> 0}$
that maps every upper level state to a positive real-valued reward.
We also require that from every state $s \in S_l$ player~1 can ensure to
reach a state in $S_u$ with probability~1.
In other words, for all $s \in S_l$, there exists a player~1 strategy
$\straa$ such that against all player~2 strategies $\strab$ we have
$\Prb_s^{\straa,\strab}(\Reach{S_u})=1$, where $\Reach{S_u}$ is the
set of paths that visit a state in $S_u$.

\smallskip\noindent{\bf Discounted objectives.}
An objective $f$ is a measurable function $f:\Paths \to \reals$ that
assigns to every path a real-valued payoff.
The \emph{discounted objective} in two-level discounted games is a
measurable function $\TwoDisc: \Paths \to \reals$ defined as follows: for
$0 < \beta <1$, consider a path $\pat=\seq{s_0,s_1,s_2,\ldots}$ and
for an index $i \geq 0$, let
\[
\alpha(i)= \begin{cases}
0 & s_i \in S_l; \\
\beta^k \cdot r(s_i) & s_i \in S_u \text{ and the number of $S_u$ states in $\seq{s_0,\ldots,s_{i-1}}$ is $k-1$};
\end{cases}
\]
then 
\[
\TwoDisc(\pat)= (1 - \beta) \cdot \sum_{i=0}^\infty \alpha(i).
\]
In other words, the payoff of a path is the normalized discounted sum of the rewards of
the path and the discounting is applied for every upper level state.

\medskip\noindent{\bf Optimal strategies.}
Given objectives $f$ and $-f$ for player~1 and player~2,
respectively, we define the \emph{value} functions
$\va$ and $\vb$ for the players~1 and~2, respectively, as the following
functions from the state space $S$ to the set $\reals$ of reals:
for all states $s\in S$, let
\[
\va(f)(s)=
\displaystyle \sup_{\straa \in \Straa} \inf_{\strab \in \Strab}
\Exp_s^{\straa,\strab}[f];
\quad
\vb (-f)(s) =
\displaystyle \sup_{\strab\in \Strab}
\inf_{\straa \in \Straa} \Exp_s^{\straa,\strab}[-f].
\]
In other words, the value $\va(f)(s)$ gives the maximal
expectation with which player~1 can achieve her objective $f$ from state~$s$,
and analogously for player~2.
The strategies that achieve the value are called optimal:
a strategy $\straa$ for player~1  is \emph{optimal} from the state
$s$ for the objective $f$ if
$\va(f)(s)=\inf_{\strab \in \Strab} \Exp_s^{\straa,\strab}[f]$.
The optimal strategies for player~2 are defined analogously.
We now state the classical determinacy results for $2\half$-player
games with measurable objectives.

\begin{theorem}[Quantitative determinacy~\cite{Mar98}]
\label{thrm:quan-det}
For all $2\half$-player game graphs $G=((S,E),(S_1,S_2,\SR),\trans)$
and for all measurable functions $f$, we have
$\va(f)(s) + \vb(-f)(s)=0$ for all states $s\in S$.
\end{theorem}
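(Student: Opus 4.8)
The statement is Martin's theorem on the determinacy of Blackwell games, so the plan is to recall how it is obtained rather than to reprove Borel determinacy. First, the easy inequality: for any fixed pair $\straa \in \Straa$, $\strab \in \Strab$ the quantity $\Exp_s^{\straa,\strab}[f]$ is the same on both sides, whence $\sup_{\straa} \inf_{\strab} \Exp_s^{\straa,\strab}[f] \le \inf_{\strab} \sup_{\straa} \Exp_s^{\straa,\strab}[f]$; unwinding the definition of $\vb$ this says precisely $\va(f)(s) + \vb(-f)(s) \le 0$. Everything therefore reduces to the reverse inequality, equivalently to the dichotomy: for every real $c$, either player~$\PA$ can guarantee expected payoff $\ge c$ from $s$ (so $\va(f)(s) \ge c$) or player~$\PB$ can guarantee expected payoff $\le c$ (so $\inf_{\strab}\sup_{\straa}\Exp_s^{\straa,\strab}[f] = -\vb(-f)(s) \le c$).

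Second, I would place the $2\half$-player game inside the Blackwell (concurrent stochastic) setting: a turn-based game graph is the special case of a concurrent game in which at each state only one player's action is relevant, and the probabilistic states $\SR$ are handled by letting ``nature'' resolve them. It then suffices to invoke Martin's determinacy theorem for Blackwell games with a bounded Borel payoff (the objective $\TwoDisc$ is bounded, since the state space and hence $r$ are finite, and Borel, being a pointwise limit of its finitary partial sums). The heart of that theorem is a coding argument: fix $c$, and build an auxiliary \emph{perfect-information} Gale--Stewart game whose moves encode the players' mixed choices together with the coin outcomes needed to realize them, so that a play of the auxiliary game induces both a play $\pat$ of $\gamegraph$ and enough randomness to evaluate the expectation; player~$\PA$ wins the auxiliary play exactly when the realized payoff is $\ge c$. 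Martin shows this winning set is Borel whenever $f$ is, so classical Borel determinacy hands one of the two players a winning strategy in the auxiliary game.

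Third, transfer back: integrating out the coded randomness turns a player-$\PA$ winning strategy into a randomized player-$\PA$ strategy $\straa$ with $\inf_{\strab}\Exp_s^{\straa,\strab}[f] \ge c$, and symmetrically a player-$\PB$ winning strategy into $\strab$ with $\sup_{\straa}\Exp_s^{\straa,\strab}[f] \le c$; so for every $c$ at least one of ``$\va(f)(s)\ge c$'' and ``$-\vb(-f)(s)\le c$'' holds. Were $\va(f)(s) < -\vb(-f)(s)$, any $c$ strictly between them would contradict this, so $\va(f)(s) = -\vb(-f)(s)$, i.e.\ $\va(f)(s)+\vb(-f)(s)=0$. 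Note that only the values, not attained optimal strategies, are needed for this equality.

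I expect the second step to be the real obstacle. One must design the auxiliary perfect-information game so that (i) it is genuinely a Gale--Stewart game of perfect information, (ii) the encoded winning condition for player~$\PA$ stays Borel under the coding, and (iii) strategies lift in both directions without altering the guaranteed expected payoff. This measurable-coding bookkeeping is exactly the technical core of \cite{Mar98}; the surrounding reductions are soft.
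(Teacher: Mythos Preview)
The paper does not prove this theorem at all: it is stated as a black-box citation of Martin~\cite{Mar98}, and only its consequence (determinacy of the $\TwoDisc$ objective) is used downstream. There is therefore no proof in the paper to compare your proposal against.

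That said, your sketch is a faithful high-level account of Martin's argument, and you correctly separate the soft $\sup\inf\le\inf\sup$ direction from the hard one that rests on Borel determinacy. Two minor observations. First, the theorem as phrased here asserts the equality for \emph{all} measurable $f$, whereas Martin's result is for bounded Borel payoffs; you tacitly narrow to $\TwoDisc$ when checking the hypotheses, which is all the paper actually needs, but the stated generality would require either that restriction or an additional truncation argument for unbounded~$f$. Second, because the present game is turn-based rather than concurrent, the full Blackwell-game machinery (coding the players' \emph{mixed} moves into a perfect-information game) is heavier than strictly necessary: with only one player moving at each state and nature resolving $\SR$, one can encode nature's randomness as an auxiliary real sequence and apply Martin's 1975 Borel determinacy for Gale--Stewart games directly, bypassing the concurrent-strategy coding. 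Your route through~\cite{Mar98} is correct, just slightly more general than the $2\half$-player setting demands.
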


The determinacy result follows for two-level discounted games.
In the following section we will study the complexity of optimal
strategies and the computational complexity of solving two-level discounted
games.
We first recall a result about the classical discounted games.
The classical discounted games are special cases of two-level discounted games
such that $S_l=\emptyset$; i.e., the game consists of only upper level states.
We refer to this class of games as {\em one-level} discounted games.

\begin{theorem}[Memoryless determinacy of one-level discounted games~\cite{FV97}]
For all $2\half$-player one-level discounted games, pure memoryless
optimal strategies exist for both players.
\end{theorem}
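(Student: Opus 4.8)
\noindent The plan is to follow Shapley's fixed-point method. In a one-level discounted game $S_l=\emptyset$, so $S=S_u$ and the payoff $\TwoDisc(\seq{s_0,s_1,\ldots})$ is, up to a global normalizing constant, the geometrically discounted sum $\sum_{i\ge0}\beta^{\,i}r(s_i)$. Define the Bellman operator $T\colon\reals^S\to\reals^S$ by
\[
(Tv)(s)=\rho(s)+\beta\cdot
\begin{cases}
\max_{t\in E(s)}v(t) & \text{if } s\in\SA,\\
\min_{t\in E(s)}v(t) & \text{if } s\in\SB,\\
\sum_{t\in S}\trans(s,t)\,v(t) & \text{if } s\in\SR,
\end{cases}
\]
where $\rho(s)\ge 0$ is the normalized one-step reward contribution of $s$ (the precise value of this constant, which carries the factor $1-\beta$ from the definition of $\TwoDisc$, is immaterial to the argument). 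Since $S$ and every $E(s)$ are finite, $T$ is well defined; and since maximum, minimum, and convex combination are all nonexpansive in $\|\cdot\|_\infty$, the factor $\beta<1$ makes $T$ a contraction of modulus $\beta$. By the Banach fixed-point theorem it has a unique fixed point $v^{*}$.

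Next I would read candidate optimal strategies off $v^{*}$: let $\straa^{*}\in\Straa$ be the player-$\PA$ strategy that in each $s\in\SA$ moves to some successor attaining $\max_{t\in E(s)}v^{*}(t)$, and let $\strab^{*}\in\Strab$ be the player-$\PB$ strategy that in each $s\in\SB$ moves to some successor attaining the corresponding minimum. Both are well defined, because every $E(s)$ is a nonempty finite set, and both are pure and memoryless by construction.

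The core of the argument is to show that $v^{*}=\va(\TwoDisc)$ and that $\straa^{*}$, $\strab^{*}$ are optimal. Fix $\strab^{*}$. For every state $s$ and every $\straa\in\Straa$, the fixed-point equation together with the definition of $\strab^{*}$ yields the one-step inequality $v^{*}(s)\ge\rho(s)+\beta\,\Exp^{\straa,\strab^{*}}[v^{*}(\text{successor of }s)]$, with equality at $\SB$- and $\SR$-states and ``$\ge$'' at $\SA$-states, because the maximum dominates any particular move. Iterating this along the random walk $\pat_s^{\straa,\strab^{*}}$ and truncating after $n$ steps gives
\[
v^{*}(s)\ \ge\ \Exp_s^{\straa,\strab^{*}}\!\Bigl[\textstyle\sum_{i<n}\beta^{\,i}\rho(s_i)\Bigr]+\beta^{\,n}\,\Exp_s^{\straa,\strab^{*}}\bigl[v^{*}(s_n)\bigr].
\]
Letting $n\to\infty$, the tail $\beta^{\,n}\Exp[v^{*}(s_n)]$ vanishes since $\beta<1$ and $v^{*}$ is bounded, while the partial sums converge (dominated convergence, rewards bounded) to $\TwoDisc$; hence $\Exp_s^{\straa,\strab^{*}}[\TwoDisc]\le v^{*}(s)$ for all $\straa$, so $\inf_{\strab}\sup_{\straa}\Exp_s^{\straa,\strab}[\TwoDisc]\le v^{*}(s)$. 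Running the symmetric computation with $\straa^{*}$ fixed gives $\sup_{\straa}\inf_{\strab}\Exp_s^{\straa,\strab}[\TwoDisc]\ge v^{*}(s)$. Since $\sup_{\straa}\inf_{\strab}\le\inf_{\strab}\sup_{\straa}$ always holds (and, alternatively, Theorem~\ref{thrm:quan-det} already asserts the game has a value), all these quantities equal $v^{*}(s)$; thus $v^{*}=\va(\TwoDisc)$, the strategies $\straa^{*}$ and $\strab^{*}$ attain it, and the deterministic $2$-player and MDP cases follow as the special cases $\SR=\emptyset$ and $\SA=\emptyset$ (or $\SB=\emptyset$).

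I expect the main obstacle to be the passage to the limit in the displayed inequality: carefully justifying that iterating the one-step inequality and sending the horizon $n$ to infinity recovers the full infinite-horizon payoff $\TwoDisc$, that the $\beta^{\,n}$ tail is asymptotically negligible, and that $\Exp$ may be interchanged with the infinite sum. All of this is routine precisely because the state space is finite (so $r$ and $v^{*}$ are bounded) and $\beta<1$ -- which is exactly where discounting is used -- but it is the one place where more than algebra is required.
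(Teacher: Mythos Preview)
The paper does not supply its own proof of this theorem: it is stated as a classical result and attributed to~\cite{FV97} (Filar--Vrieze), with no argument given in the text. So there is nothing to compare your proposal against on the paper's side.

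That said, your argument is correct and is exactly the standard Shapley contraction proof one finds in~\cite{Sha53,FV97}: define the Bellman operator, observe that $\max$, $\min$, and expectation are nonexpansive in $\|\cdot\|_\infty$ so the operator is a $\beta$-contraction, extract greedy pure memoryless strategies from the unique fixed point, and verify optimality by rolling out the one-step inequality and letting the $\beta^{\,n}$ tail vanish. Your identification of the limit step as the only nontrivial part is accurate, and your justification (bounded rewards and $v^{*}$ on a finite state space, $\beta<1$, dominated convergence) is sufficient. The remark that the precise normalizing constant in $\TwoDisc$ is immaterial is also correct: multiplying the payoff by a positive scalar does not change which strategies are optimal.
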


\section{Strategy and Computational Complexity}

We first show that pure memoryless optimal strategies exist in two-level
discounted games.
We first present a special class of two-level discounted games and reduce
it to one-level discounted games.

\medskip\noindent{\bf One-step two-level discounted games.} The class of
one-step two-level discounted games are the special case of two-level
discounted games such that the following restrictions are satisfied:
(a)~$S_l \subseteq \SR$ (i.e., every lower level state is a probabilistic
state); and
(b)~$E \cap S_l \times S \subseteq S_l \times S_u$ (i.e., every successor of a
state in $S_l$ is a state in $S_u$).
In other words, in one-step two-level discounted games from any lower level
state the upper level states are reached in one step with probability~1.
An one-step two-level discounted game can be reduced to one-level
discounted games as follows.
We convert every state in $S_l$ to a state in $S_u$; and the reward function is
modified as follows: we add rewards to the states in $S_l$ to take care of the extra
discounting step for converting a state in $S_l$ to a state in $S_u$, i.e., for
a state $s \in S_l$ its reward is assigned as
\[
r(s) = (1-\beta)\cdot \sum_{t\in S} r(t) \cdot \trans(s)(t).
\]
Since we can reduce one-step two-level discounted games to one-level discounted games,
the existence of pure memoryless optimal strategies in one-step two-level discounted
games follows.

\begin{theorem}
Pure memoryless optimal strategies exist for both players in two-level discounted
games.
\end{theorem}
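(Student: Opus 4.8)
The plan is to reduce a general two-level discounted game to a one-step two-level discounted game, for which the existence of pure memoryless optimal strategies has already been established via the reduction to one-level discounted games. The key observation is that the lower level game, from every state of $S_l$, is an undiscounted reachability game with target $S_u$ which player~1 can win with probability~1; such $2\half$-player reachability games admit pure memoryless optimal strategies for both players by the classical results of \cite{KS81,FV97,Mar98}. The idea is to "collapse" each maximal lower-level segment of a play into a single probabilistic step whose transition distribution records where the play first re-enters $S_u$, while contributing no reward (consistent with $\alpha(i)=0$ for $s_i\in S_l$).

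Concretely, first I would fix the game graph $G$ with its partition $(S_u,S_l)$, and consider the sub-arena induced on $S_l \cup S_u$ where states in $S_u$ are treated as absorbing "sinks" of the reachability game. Since from every $s\in S_l$ player~1 can force $\Reach{S_u}$ with probability~1, the value of this reachability game is~1 at every lower-level state, and there exist pure memoryless optimal strategies $\straa^\ast$ for player~1 and $\strab^\ast$ for player~2 in it. The next step is to build a new one-step two-level discounted game $G'$: keep all states of $S_u$ with the same reward function $r$; for each lower-level state, introduce a fresh probabilistic state whose outgoing distribution is the distribution over $S_u$ given by the first-hitting location of $S_u$ when both players play their pure memoryless optimal lower-level strategies. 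Because player~1 wins the reachability game surely (probability~1), this first-hitting distribution is a genuine probability distribution on $S_u$ (no mass escapes to infinity), so $G'$ is well defined; moreover by construction every successor of a lower-level state in $G'$ lies in $S_u$, so $G'$ is one-step. Applying the already-proven result for one-step two-level discounted games, $G'$ has pure memoryless optimal strategies for both players; pulling these back through the reachability strategies $\straa^\ast,\strab^\ast$ (which are themselves pure memoryless on $S_l$) yields pure memoryless strategies in the original game $G$, and a standard argument shows they are optimal and realize the same value, since the discounted payoff $\TwoDisc$ ignores lower-level states entirely and depends only on the sequence of upper-level states visited and the $\beta$-discounted rewards collected there.

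The main obstacle I expect is the interaction between the two levels when player~2 is allowed to deviate inside a lower-level segment. One must argue that collapsing $S_l$ using the fixed pair $(\straa^\ast,\strab^\ast)$ does not help or hurt either player in the collapsed game relative to the original game — i.e., that player~2 gains nothing by playing non-optimally for reachability inside $S_l$ (since any deviation that slows down reaching $S_u$ is neutral for the discounted payoff, as lower-level steps carry zero reward and no discount), and symmetrically that player~1 needs no more than the reachability-optimal play on $S_l$ because her discounted payoff is determined only once she is back in $S_u$. Making this rigorous requires a value-preservation lemma: the value of $G$ at every state equals the value of $G'$ at the corresponding state, proved by exhibiting, from any strategy in one game, a strategy in the other achieving the same expectation against all opponent strategies — using the memoryless reachability optimality to handle the $S_l$ portions and the determinacy theorem (Theorem~\ref{thrm:quan-det}) to glue the inequalities into equalities. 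Once this lemma is in hand, memoryless optimality transfers directly, and the discounting-across-time-scales feature is exactly what makes the collapse payoff-neutral, so no subtler argument about the varying time granularity is needed.
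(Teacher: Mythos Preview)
There is a genuine gap in your collapse step. You solve the \emph{plain} reachability game on $S_l$ with target set $S_u$, observe that its value is~$1$ everywhere (by the standing assumption on two-level games), and then use some pair of reachability-optimal pure memoryless strategies $(\straa^\ast,\strab^\ast)$ to define the first-hitting distribution that becomes the probabilistic transition in $G'$. But when the reachability value is~$1$ at every state of $S_l$, essentially every pure memoryless strategy of player~2 is reachability-optimal, and many strategies of player~1 are too; the pair $(\straa^\ast,\strab^\ast)$ carries no information about \emph{which} state of $S_u$ the play should enter. Your argument that ``player~2 gains nothing by playing non-optimally for reachability'' only addresses the \emph{timing} of arrival in $S_u$ (correctly: lower-level steps incur neither reward nor discount), but it ignores the \emph{location} of arrival, which does matter.

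A concrete counterexample: take $s\in S_l\cap S_2$ with $E(s)=\{t_1,t_2\}\subseteq S_u$, where $\va(\TwoDisc)(t_1)$ is large and $\va(\TwoDisc)(t_2)$ is small. Both edges are reachability-optimal for player~2, so your $\strab^\ast$ might well pick $s\to t_1$; then in $G'$ the state $s$ deterministically transitions to $t_1$, and the value of $G'$ at $s$ equals $\va(\TwoDisc)(t_1)$. In the original game, however, player~2 will choose $s\to t_2$, and the true value at $s$ is $\va(\TwoDisc)(t_2)$. So the value-preservation lemma you need is false for the collapse you define, and the pulled-back strategy is not optimal. The paper avoids this by solving, on $S_l$, not a plain reachability game but a terminal-reward game in which reaching $t\in S_u$ yields payoff $\va(\TwoDisc)(t)$; the pure memoryless optimal strategies $\straa^\ast,\strab^\ast$ for \emph{that} game (which exist by the standard results for turn-based stochastic games with terminal rewards, equivalently reachability games) do encode the correct steering toward high- or low-value upper states, and the resulting one-step game has the same values as the original. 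Your overall architecture is right; the fix is to replace the plain reachability objective on $S_l$ by the weighted one with terminal rewards given by the upper-level values.
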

\begin{proof}
To prove the results we will use the existence of pure memoryless optimal strategies
in reachability games, and present a reduction to one-step two-level discounted games.

First, for states in $S_l$ we consider a reachability game as follows:
once the game reaches a state $s \in S_u$, then player~1 receives the payoff
$\va(\TwoDisc)(s)$ and the game stops.
The goal of player~1 is to maximize the payoff.
Since this game is a reachability game, from the existence of pure memoryless
optimal strategies in turn-based probabilistic reachability games~\cite{Con92},
it follows that pure memoryless optimal strategies $\straa^*$ and $\strab^*$
exist for both players in this game.
Since the reward function is positive, it follows that $\va(\TwoDisc)(s)>0$ for
all $s \in S_u$.
Since in two-level discounted games player~1 can ensure to reach $S_u$ with
probability~1, it follows that once $\straa^*$ and $\strab^*$ are fixed from
all states in $S_l$ states in $S_u$ are reached with probability~1.
Let $T$ denote the random time when the game first reaches a state in $S_u$,
and $\Theta_T$ denote the random variable for the $T$-th state.
The strategies $\straa^*$ and $\strab^*$ ensure the following:
\begin{enumerate}
\item for all strategies $\strab$ and for all states $s \in S_l$ we have
\[
\va(\TwoDisc)(s) \geq \sum_{t\in S_u} \Prb_{s}^{\straa^*,\strab}(\Theta_T=t) \cdot
\va(\TwoDisc(t));
\]
\item for all strategies $\straa$ and for all states $s \in S_l$ we have
\[
\va(\TwoDisc)(s) \leq \sum_{t\in S_u} \Prb_{s}^{\straa,\strab^*}(\Theta_T=t) \cdot
\va(\TwoDisc(t));
\]
and
\item $\Prb_s^{\straa^*,\strab^*}(T < \infty)=1$ and
$\va(\TwoDisc)(s) = \sum_{t\in S_u} \Prb_{s}^{\straa^*,\strab^*}(\Theta_T=t) \cdot
\va(\TwoDisc(t))$.
\end{enumerate}
We now present a reduction from two-level discounted games to
one-step two-level discounted games.
We replace each state $s \in S_l$ as a probabilistic state such that
from $s$ the successor state distribution is as follows:
$\trans(s)(t)=\Prb_s^{\straa^*,\strab^*}(\Reach{t})$ for $t \in S_u$.
The following assertions hold in the one-step two-level discounted game for
states in $S_u$:
\begin{enumerate}
\item for all states $s \in S_1 \cap S_u$, we have
\[
\va(\TwoDisc)(s) =\max_{t\in E(s)} \beta\cdot r(s) + (1-\beta)\cdot \va(\TwoDisc(t));
\]
\item for all states $s \in S_2 \cap S_u$, we have
\[
\va(\TwoDisc)(s) =\min_{t\in E(s)} \beta \cdot r(s) + (1-\beta)\cdot \va(\TwoDisc(t));
\]
and
\item ~for all states $s \in \SR \cap S_u$, we have
\[
\va(\TwoDisc)(s) = \beta \cdot r(s) + (1-\beta)\cdot\sum_{t \in S} \trans(s)(t)\cdot\va(\TwoDisc(t)).
\]
\end{enumerate}
The following assertions hold in the one-step two-level discounted game for
states in $S_l$:
\begin{enumerate}
\item for all states $s \in S_1 \cap S_l$, we have
$\va(\TwoDisc)(s) =\max_{t\in E(s)} \va(\TwoDisc(t))$;
\item for all states $s \in S_2 \cap S_u$ we have
$\va(\TwoDisc)(s) =\min_{t\in E(s)} \va(\TwoDisc(t))$;
and
\item for all states $s \in \SR \cap S_u$, we have
$\va(\TwoDisc)(s) = \sum_{t \in S} \trans(s)(t)\cdot\va(\TwoDisc(t))$.
\end{enumerate}
From the above inequalities, the classical correctness proof for one-level discounted games,
and the reduction of one-step two-level discounted games to one-level discounted
games, it follows that the values in the original game and the reduced
one-step two-level discounted games coincide.
The combination of the pure memoryless optimal strategy in the discounted
game obtained after reduction, and the pure memoryless optimal strategy in
the reachability game is a witness optimal strategy in the two-level discounted game.
Hence the result follows.
\qed
\end{proof}

\medskip\noindent{\bf Solution for two-level discounted MDPs.}
The existence of pure memoryless optimal strategies for two-level
discounted games is proved
by combining the solution of a reachability game and one-level discounted
games.
The result for MDPs follows as a special case.
The value function for MDPs can be obtained from the solution of a linear programming
problem which combines the linear programming solution for MDPs with reachability and
one-level discounted objectives.
The linear program for player-1 MDPs is as follows:
the objective function is $\min_{s\in S} x_s$ subject to the following
constraints
\[
\begin{array}{rcll}
x_s & \geq &  x_t & s \in S_l \cap S_1; (s,t)\in E; \\[1ex]
x_s & = & \sum_{t \in S} \trans(s)(t)\cdot x_t & s \in S_l \cap \SR; \\[1ex]
x_s & \geq & \beta \cdot r(s) + (1-\beta)\cdot x_t & s \in S_u \cap S_1;
(s,t)\in E; \\[1ex]
x_s & = & \beta \cdot r(s) + (1-\beta)\cdot \sum_{t \in S}
\trans(s)(t)\cdot x_t \quad & s \in S_l \cap \SR;
\end{array}
\]
The solution for player-2 MDPs is similar. This gives us the following result.

\begin{theorem}
Given a two-level discounted game on a player-1 MDP or a player-2 MDP, the
value at all states can be computed in polynomial time.
\end{theorem}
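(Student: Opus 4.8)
The plan is to prove that the value vector $\bigl(\va(\TwoDisc)(s)\bigr)_{s\in S}$ is the unique optimal solution of the linear program displayed above, and then to invoke the polynomial-time solvability of linear programming (Khachiyan's ellipsoid method, or an interior-point method). Observe first that the linear program is written directly over the states of the two-level game: it can be constructed from $G$, $\beta$ and $r$ without first knowing the values or the optimal strategies, since its constraints are exactly the Bellman (in)equalities for the lower-level (reachability-like) states and the upper-level (discounted) states.

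Next I would check feasibility. By the assertions established in the proof of the strategy-existence theorem — the equalities $\va(\TwoDisc)(s)=\max_{t\in E(s)}\va(\TwoDisc)(t)$ for $s\in S_l\cap S_1$ (and the analogous averaging equality for probabilistic lower-level states), together with $\va(\TwoDisc)(s)=\beta\cdot r(s)+(1-\beta)\cdot(\text{successor value})$ for $s\in S_u$ — the value vector satisfies every constraint of the linear program, with each inequality constraint tight at the successor chosen by an optimal pure memoryless strategy. Hence $\va(\TwoDisc)$ is a feasible point.

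The crux is to show it is the \emph{optimal} feasible point. One must be careful here: the lower-level fragment is undiscounted, so the inequality constraints alone do not pin down a unique solution; the value is the least feasible solution, which is precisely why the objective minimizes $\sum_{s\in S}x_s$. So I would fix an arbitrary feasible $(x_s)_{s\in S}$ and prove $x_s\ge\va(\TwoDisc)(s)$ for all $s$. On the upper-level states the discount $1-\beta<1$ makes the induced operator a contraction (each $S_u$-step strictly discounts), so once the lower-level continuation has been resolved the constraints force $x$ restricted to $S_u$ to dominate the corresponding one-level discounted-game value. On the lower-level states a standard reachability-MDP argument applies: unfolding player~1's optimal pure memoryless strategy $\straa^*$ against an optimal $\strab^*$ — equivalently, the induced one-step probabilistic transitions $\trans(s)(t)=\Prb_s^{\straa^*,\strab^*}(\Reach{t})$ — and using $\Prb_s^{\straa^*,\strab^*}(T<\infty)=1$ (with $T$ the first hitting time of $S_u$ and $\Theta_T$ the state reached) shows that any feasible $x$ majorizes the expected continuation value at $\Theta_T$, hence $\va(\TwoDisc)(s)$. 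Composing the two — the value is the least simultaneous fixpoint, obtained by interleaving the discounted Bellman update on $S_u$ with the reachability resolution on $S_l$ and iterating to convergence — yields $x\ge\va(\TwoDisc)$ componentwise; since $\va(\TwoDisc)$ is itself feasible, it is the unique minimizer. This is exactly the composition of the two classical correctness arguments used in the strategy-existence theorem.

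Finally, the linear program has size polynomial in $|G|$ and in the bit-size of $\beta$ and of the rewards, so it is solvable in polynomial time, and its optimal solution is the value at every state. The player-2 MDP case is symmetric: the constraints dualize ($\max$ becomes $\min$) and the objective becomes maximization of $\sum_{s\in S}x_s$, the value now being the greatest feasible solution. I expect the main obstacle to be the lower-to-upper coupling in the optimality direction — simultaneously controlling the contraction on the discounted part and the least-fixpoint subtlety on the reachability part, so that the single linear objective selects the value rather than some strictly larger feasible vector.
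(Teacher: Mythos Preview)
Your proposal is correct and follows essentially the same approach as the paper: exhibit the combined reachability/discounted linear program over the states of the two-level game, argue that the value vector is its least feasible solution, and invoke polynomial-time LP solving. In fact you supply more detail than the paper does---the paper simply presents the LP and asserts that it ``combines the linear programming solution for MDPs with reachability and one-level discounted objectives,'' whereas you sketch the feasibility and optimality directions explicitly; the only cosmetic slip is that in the MDP case the reference to an opposing strategy $\strab^*$ is vacuous.
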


A class of discounted games
has the {\em ordered field property} if for every game $\TwoDisc$
in the class with rewards, transition probabilities, and
discount factors chosen from a field $F$, we have that the value
$\va(\TwoDisc)(s)$ is also in $F$ for each state $s$.

\begin{corollary}[Ordered field property]
Given a two-level discounted game, if the rewards, discount factor, and
transition probabilities are rational, then the value at evey state is
rational.
The class of all two-level discounted games have the ordered field property.
\end{corollary}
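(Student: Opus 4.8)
\medskip\noindent\textbf{Proof plan.}
The plan is to combine the existence of pure memoryless optimal strategies in two-level discounted games with the linear-programming characterization of the value for two-level discounted MDPs. Fix an ordered field $F$ containing all rewards $r(s)$, the discount factor $\beta$ (hence also $1-\beta$), and all transition probabilities $\trans(s)(t)$; for the rationality statement one takes $F=\mathbb{Q}$. First I would pick a pure memoryless optimal strategy $\strab^{*}\in\Strab^{\PM}$ for player~2, which exists by the memoryless determinacy theorem for two-level discounted games. Since $\strab^{*}$ is pure, fixing it leaves every player-2 state with a single outgoing edge, so the residual structure is a player-1 MDP game graph; it retains the same upper/lower partition, the same reward function, and the same transition probabilities --- all with entries in $F$ --- and it still satisfies the standing requirement that from every $s\in S_l$ player~1 can force $\Reach{S_u}$ with probability~1 (player~1 keeps all of its strategies, in particular one that works against $\strab^{*}$). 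Thus we obtain a genuine two-level discounted game on a player-1 MDP. By Theorem~\ref{thrm:quan-det} and optimality of $\strab^{*}$, for every state $s$ the player-1 value of this MDP equals $\sup_{\straa}\Exp_s^{\straa,\strab^{*}}[\TwoDisc]=\va(\TwoDisc)(s)$, so the reduction preserves the value state by state. (Symmetrically one could instead fix a pure memoryless optimal $\straa^{*}$ for player~1 and obtain a player-2 MDP, and for MDP games no reduction is needed at all.)

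Next I would invoke the value characterization for two-level discounted MDPs: the value vector $\bigl(\va(\TwoDisc)(s)\bigr)_{s\in S}$ is the optimal solution of the linear program displayed above, every coefficient of which --- namely $1$, $\beta$, $1-\beta$, the products $\beta\cdot r(s)$, and the $\trans(s)(t)$ --- lies in $F$. Since the value vector is the unique optimal solution of that linear program, and an optimum of a linear program with coefficients in an ordered field $F$ is attained at a basic feasible solution, i.e., as the unique solution of a square subsystem of the defining (in)equalities, Gaussian elimination applied to data in $F$ yields it; hence the value vector lies in $F^{|S|}$. Therefore $\va(\TwoDisc)(s)\in F$ for every state $s$. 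Taking $F=\mathbb{Q}$ gives rationality of the value whenever the rewards, discount factor, and transition probabilities are rational, and the general statement is exactly the ordered field property for the class of all two-level discounted games.

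The step that needs the most care is the first reduction: one must verify that fixing player~2's pure memoryless optimal strategy genuinely produces a two-level discounted MDP whose value agrees with $\va(\TwoDisc)$ at every state --- this combines quantitative determinacy (Theorem~\ref{thrm:quan-det}), memoryless determinacy, and the observation that the ``reach $S_u$ with probability~1 from $S_l$'' precondition is inherited by the residual MDP. Once the problem is reduced to an MDP the remainder is the classical polyhedral/linear-algebra argument over an ordered field, exactly as for one-level discounted games and for stochastic reachability games.
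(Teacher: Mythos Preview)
Your proof is correct and follows essentially the same route as the paper: fix a pure memoryless optimal strategy for one player (using memoryless determinacy), reduce the game to a two-level discounted MDP whose value coincides with the original by quantitative determinacy, and then conclude from the linear-programming characterization that the value vector lies in the base field. Your write-up is more explicit than the paper's about checking that the residual structure is a well-formed two-level MDP (in particular that the ``$\Reach{S_u}$ with probability~1'' assumption survives) and about why LP optima over an ordered field stay in that field, but the argument is the same.
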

\begin{proof}
The results follows from the existence of pure memoryless optimal strategies
and the existence of linear program that characterizes the values on MDPs.
Once a pure memoryless optimal strategy is fixed we have an MDP, and by the
linear program characterizing the value for MDPs it follows that if the rewards,
discount factor and transition probabilities are rational, then the value at
every state is rational.
The ordered field property follows from similar arguments.
\qed
\end{proof}

\smallskip\noindent{\bf Complexity of two-level discounted games.}
Since pure memoryless optimal strategies exist for both players in two-level
discounted games, and MDPs with two-level discounted objectives can be solved
in polynomial time, it follows that the decision problem for the value
function in two-level discounted games can be solved in NP $\cap$ coNP.
Hence we have the following result.

\begin{theorem}
Given a two-level discounted game, a rational number $q$, a state $s$, and
$\bowtie \in \set{\geq,>,\leq,<,=}$, whether $\va(\TwoDisc)(s) \bowtie q$
can be decided in NP $\cap$ coNP.
\end{theorem}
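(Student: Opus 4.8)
The plan is to combine two facts already established: both players have pure memoryless optimal strategies in two-level discounted games; and two-level discounted games on a player-$1$ (or player-$2$) MDP can be solved in polynomial time, the value being a rational of polynomially bounded bit-size (the optimum of an explicit linear program over the states), so comparing it with the given rational $q$ costs only polynomial time. Given these, the argument is the usual ``guess a pure memoryless strategy of one player and solve the residual MDP'' scheme. I will first establish $\va(\TwoDisc)(s) \ge q \in \NP \cap \coNP$ and then deduce the general $\bowtie$ by closure properties.

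For $\NP$-membership of $\va(\TwoDisc)(s) \ge q$, I would guess a pure memoryless player-$1$ strategy $\straa \in \Straa^{\PM}$. The subtle point is that fixing $\straa$ need not leave a well-formed two-level discounted game, since the requirement ``from every $S_l$-state player~$1$ can force $\Reach{S_u}$ almost surely'' may fail; so the verifier should first check in polynomial time (a sure-reachability question in the player-$1$-fixed MDP) that $\straa$ still has this property --- call such $\straa$ \emph{admissible} --- and reject otherwise. For admissible $\straa$ the residual object is a two-level discounted game on a player-$2$ MDP whose value $v_\straa(s) = \inf_{\strab \in \Strab}\Exp_s^{\straa,\strab}[\TwoDisc]$ is computed in polynomial time; the verifier accepts iff $v_\straa(s) \ge q$. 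Soundness is immediate since $\va(\TwoDisc)(s) \ge v_\straa(s)$. For completeness I would take a pure memoryless strategy $\straa^*$ optimal at every state and argue it is admissible: otherwise, under $\straa^*$ together with some player-$2$ strategy, a bottom strongly connected component $C$ of the induced Markov chain with $C \subseteq S_l$ is reached with positive probability, and from any state of $C$ the payoff $\TwoDisc$ is identically $0$ (only $S_l$-states are ever visited), contradicting optimality of $\straa^*$ there --- the value being positive because rewards are positive and $S_u$ is almost-surely forceable. Then $v_{\straa^*}(s) = \va(\TwoDisc)(s)$, so $\straa^*$ is an accepted certificate whenever $\va(\TwoDisc)(s) \ge q$.

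For $\coNP$-membership of $\va(\TwoDisc)(s) \ge q$, equivalently $\NP$-membership of $\va(\TwoDisc)(s) < q$, I would guess a pure memoryless player-$2$ strategy $\strab \in \Strab^{\PM}$. Here fixing $\strab$ leaves player~$1$ with all her moves, so the structural requirement is automatically preserved and the residual object is a two-level discounted game on a player-$1$ MDP with value $u_\strab(s) = \sup_{\straa \in \Straa}\Exp_s^{\straa,\strab}[\TwoDisc]$, again polynomial-time computable, and no admissibility check is needed. Using quantitative determinacy together with a pure memoryless optimal player-$2$ strategy, one has $\va(\TwoDisc)(s) = \inf_{\strab \in \Strab^{\PM}} u_\strab(s)$, so $\va(\TwoDisc)(s) < q$ holds exactly when some $\strab$ satisfies $u_\strab(s) < q$ --- an $\NP$ certificate. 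Hence $\va(\TwoDisc)(s) \ge q$ is in $\NP \cap \coNP$; by the symmetric argument (swapping the two players, i.e.\ passing to $-\TwoDisc$) so is $\va(\TwoDisc)(s) \le q$. The strict relations are the complements of these, and $\va(\TwoDisc)(s) = q$ is the conjunction of $\va(\TwoDisc)(s) \ge q$ and $\va(\TwoDisc)(s) \le q$; since $\NP \cap \coNP$ is closed under complement and intersection, every $\bowtie$ is covered.

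The main obstacle, as flagged above, is precisely the asymmetry in the reduction: unlike for ordinary stochastic games, fixing player~$1$'s pure memoryless strategy can destroy the defining ``almost-sure reach $S_u$'' condition, and with it the applicability of the MDP algorithm, so the $\NP$ certificate for the lower bound must additionally certify admissibility; the real work is verifying that an everywhere-optimal pure memoryless player-$1$ strategy is automatically admissible, which is what keeps the certificate complete. Everything else is routine: the residual MDP values are exactly the quantities supplied by memoryless determinacy, and they are rationals of polynomially bounded size, so the final comparisons with $q$ cost only polynomial time.
\qed
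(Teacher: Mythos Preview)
Your proof is correct and follows the same high-level scheme as the paper: guess a pure memoryless strategy for one player (such strategies exist by the earlier theorem), solve the resulting two-level discounted MDP in polynomial time, and compare with $q$. The paper states the result essentially as an immediate corollary of those two facts and gives no further detail.

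Where you go beyond the paper is in isolating and resolving the admissibility issue: fixing a pure memoryless player-$1$ strategy can destroy the structural requirement that $S_u$ is almost-surely reachable from every $S_l$-state, so the residual object need not be a two-level discounted MDP to which the polynomial-time algorithm applies. The paper does not address this. Your fix --- rejecting inadmissible guesses and showing, via the bottom-SCC / positivity-of-value argument, that an everywhere-optimal pure memoryless player-$1$ strategy is automatically admissible --- is exactly what is needed to make the NP direction for the lower bound watertight; the asymmetry you note (no such issue when fixing player~$2$) is also correct. So the approach is the paper's, but your execution fills a gap the paper leaves implicit.
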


\smallskip\noindent{\bf Algorithm for computing values.}
The existence of pure memoryless strategies ensure the correctness of
the following naive algorithm
to compute the values in two-level discounted game:
(a)~enumerate all pure memoryless
strategies, and for each pure memoryless strategy compute the value for the MDP
obtained by fixing the strategy (using the linear program), and
(b) choose the value of the best pure memoryless strategy.
The above algorithm is an exhaustive search on the set of pure memoryless strategies.
We now describe an efficient search on the set of pure memoryless strategies
given as a strategy improvement algorithm for two-level discounted games.
The strategy improvement algorithm combines in a hierarchical fashion two classical
strategy improvement algorithms: (a)~the strategy improvement algorithm for
stochastic games with discounted objectives~\cite{FV97} and (b)~the strategy improvement
algorithm for stochastic reachability games~\cite{Con93}.

The strategy improvement algorithm is as follows:
(a)~fix a pure memoryless strategy at the upper-level states;
(b)~apply the strategy improvement algorithm for  reachability games
for the lower-level reachability game to compute values given the strategy that is
fixed in the higher-level game; and
(c)~once the values are computed, apply the strategy improvement step for discounted
games to improve the upper-level strategy.
The algorithm stops when no improvement is possible and obtains a pure memoryless
optimal strategy.
This gives us a strategy improvement algorithm to compute values in two-level discounted
games.

\section{Conclusion}

We have introduced a new model of stochastic games that provide
a uniform framework for decision making across different time scales.
We have shown that pure memoryless optimal strategies exists in these games.
Our framework subsumes classical discounted games, and provides a natural extension
in which discounting is applied at different time granularities.
We show that in our framework the solution for MDPs can be achieved in
polynomial time matching the best known bound of MDPs with discounted
objectives.
For two-level turn-based stochastic games we show
that whether the value is equal to a rational
can be decided in NP $\cap$ coNP, matching the best known
complexity bound for discounted stochastic games.


\end{document}